\documentclass[journal,twoside,web]{ieeecolor}

\usepackage{generic}
\usepackage{cite}
\usepackage{amsmath,amssymb,amsfonts}
\usepackage{algorithmic}
\usepackage{graphicx}
\usepackage{textcomp}

\usepackage{graphicx}
\usepackage{subfigure}
\usepackage{epsfig} 
\usepackage{cite}
\usepackage{lcsys}

\newtheorem{theorem}{Theorem}[section]

\newtheorem{remark}[theorem]{Remark}

\DeclareMathOperator*{\essinf}{ess\,inf}
\pagestyle{empty}

\begin{document}

\def\BibTeX{{\rm B\kern-.05em{\sc i\kern-.025em b}\kern-.08em
    T\kern-.1667em\lower.7ex\hbox{E}\kern-.125emX}}
\markboth{\journalname, VOL. XX, NO. XX, XXXX 2017}
{Author \MakeLowercase{\textit{et al.}}: Preparation of Papers for IEEE Control Systems Letters (August 2022)}

\title{\LARGE \bf Passive iFIR filters for data-driven control
}

\author{Zixing Wang$^{a}$,  Yongkang Huo$^{a}$ and  Fulvio Forni$^{a}$
\thanks{*The work of Zixing Wang was supported by CSC Cambridge Scholarship.  The work of Yongkang Huo was supported by the UK Engineering and Physical Sciences Research Council (EPSRC) grant 10671447.}
\thanks{$^{a}$ Department of Engineering, University of Cambridge, CB2 1PZ Cambridge, U.K.
         \{{\tt\small zxw20@.cam.ac.uk}, {\tt\small yh415@cam.ac.uk}, {\tt\small f.forni@eng.cam.ac.uk}\}.}
}
\maketitle
\thispagestyle{empty}

\begin{abstract}
We consider the design of a new class of passive iFIR controllers given by the parallel action of an integrator and a finite impulse response filter. iFIRs are more expressive than PID controllers but retain their features and simplicity. The paper provides a model-free data-driven design for passive iFIR controllers based on virtual reference feedback tuning. Passivity is enforced through constrained optimization (three different formulations are discussed). The proposed design does not rely on large datasets or accurate plant models.
\end{abstract}

\begin{IEEEkeywords}
Data driven control, Identification for control, PID control, LMIs.
\end{IEEEkeywords}

\section{Introduction}

\IEEEPARstart{F}{rom} simple prototyping to advanced control applications, a PID controller goes a long way due to its simple structure, intuitive tuning, and the presence of integral action for perfect regulation. In addition, with passive plants, PID control leads to inherently stable closed loops, even when the plant is uncertain and its model is approximated. {\color{black}This makes PID control one of the most common control techniques in industry 
\cite{ortega1998eulerbook, spong2022historical, ortega2023pid}}.

In many control applications, a stabilizing
feedback action must also satisfy complex performance requirements. 
These are often specified as desired input/output behaviors, 
in the form of a set of data or a desired transfer function. In this setting, the PID control shows limitations \textcolor{black}{\cite{BCE2011,Yonezawa&Kajuwara2023-Direct_Frictious_reference-stability}}. 
Ideally, we want {\color{black} a controller that retains the simplicity of PID control and its fundamental properties, namely integral action and passivity, but is more flexible.}

In this paper, we investigate the design of iFIR controllers, which combine the parallel action of a finite impulse response filter (FIR) and an integrator for perfect regulation. Similar formulations, also based on cascade interconnections, can be found in \cite{Formentin2011,vanHeusden2011,Yahagi2022-overfitting,deJong2023}.
The idea is to replace the proportional and derivative action of the PID controller with the richer action of a FIR filter. We retain the integrator for perfect regulation. We also restrict the FIR filter to be passive, to guarantee wide applicability in the control of electro-mechanical systems and robotics \cite{ortega1998eulerbook, spong2022historical, ortega2023pid}. {\color{black}With these features}, our hypothesis is that iFIR controllers provide a more flexible alternative to PID control, when combined with data-driven optimal tuning.

We propose a model-free data-driven design approach based on
virtual reference feedback tuning (VRFT) \cite{Campi2002}, which reduces the matching problem with a target reference model to a least-squares fitting. No plant model is needed. Passivity is enforced through constrained optimization. 
The novelty of the paper is in endowing virtual reference feedback tuning with additional constraints for passivity. These are based on the Kalman-Yakubovich-Popov (KYP) lemma, on the Toeplitz operator of the iFIR controller, and on the positive realness of its transfer function. The result is a design approach based on convex constrained optimization, that can be solved with standard tools, such as CVX \cite{cvx2,cvx1,cvxpy}. In what follows, we show that passivity constraints based on Toeplitz and positive realness formulations scale better with the size of the filter, strongly improving the complexity of the KYP lemma.

For passive plants, passive iFIR control design separates closed-loop stability and closed-loop performance. Data scarcity and low-quality data do not affect the stability of the closed loop, which is structurally guaranteed via passivity. The quality and quantity of data matter for the performance, to achieve good matching with the desired reference model. This separation between stability and performance is novel. To guarantee stability, other data-driven approaches require \textcolor{black}{the probing signal to be persistently exciting \cite{BCE2011} or informative \cite{wang&Henk}}, or provide asymptotic guarantees in the limit of infinite data length \cite{vanHeusden2011, Yonezawa&Kajuwara2023-Direct_Frictious_reference-stability, Selvi2021-VRFT-stability}.

The idea of enforcing passivity combined with a data-driven approximation is inspired by the kernel approach to nonlinear modeling offered by \cite{bib:vanWaarde2023}. There, passivity is achieved through regularization and the scattering transform. We achieve it through constrained optimization, in the narrower setting of iFIR architectures. 
\textcolor{black}{Passive filter design has 
been also explored in \cite{prakash2022} and \cite{Wu_Boyd1996}.
These approaches show strong similarities with our formulation. Our contribution is in the combination with VRFT for control design purposes (Theorems \ref{thm:KYP}, \ref{thm:finitepptLMI}, and \ref{thm:sample_f}), in the formulation based on Toeplitz matrices (Theorems \ref{thm:pptLMI} and \ref{thm:finitepptLMI}), and in the characterization of exact bounds on (sampled) frequency-based constraints (Theorem \ref{thm:sample_f}), which lead to a non-iterative design.}

The paper is structured as follows. The iFIR control design based on VRFT is illustrated in Section 2. Passivity constraints are extensively discussed in Section 3.  Section 4 provides an example of closed-loop control with linear and nonlinear plants. The conclusions follow.

\section{Data-driven design of iFIR controllers}
\label{sec:VRFT}

iFIR controllers $C$ combine integration and finite impulse response filtering. Given the finite coefficients $\{g_k\}_{k=0,...,m-1}$,
the controller is given by
\begin{equation}  \label{eq:sec2-9-d}
    C(z) \ = \ \underbrace{\frac{\gamma T_{s}}{1-z^{-1}}}_{\mbox{integrator}} + \underbrace{\ \sum_{k=0}^{m-1} g_k  z^{-k}}_{\mbox{FIR filter}}
\end{equation}
where the integrator is discretized using backward Euler discretization with sampling period $T_s$. $\gamma$ is the integral gain.
In fact, $C$ shows a generalized PID structure, where the proportional and derivative terms are replaced by a FIR filter.

$C$ can be easily designed from data, following the approach of virtual reference feedback tuning \cite{Campi2002}.
In Figure~\ref{fig:sec2-1}, we represent the desired closed-loop performance with a
time-invariant (possibly nonlinear) reference model $M_r$. $P$ is a time-invariant single-input single-output plant.
The plant is not necessarily linear but we assume \emph{passivity} from the input $u$ to the output $y$ \textcolor{black}{(e.g. Euler-Lagrangian systems and port-Hamiltonian systems \cite{ortega1998eulerbook,secchiPH2007}). For any given} reference $r$,
the ideal closed-loop behavior satisfies $y \simeq y^*$. 
\vspace{-1mm} 

\begin{figure}[htbp]
    \begin{center}
        \includegraphics[width=.62\columnwidth]{ 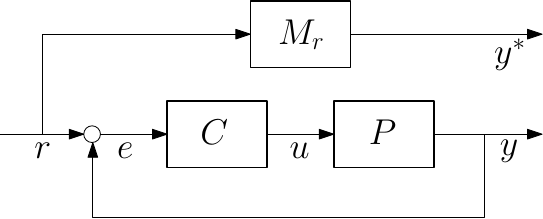}   
        \caption{Block diagram of the closed-loop system.} 
        \label{fig:sec2-1}
    \end{center} \vspace{-3mm}
\end{figure}

The controller can be derived as the solution of the following optimization problem:
\begin{equation} \label{eq:vrft_obj}
    \min_{C\in \mathrm{iFIR}} \frac{1}{N} \sum_{t=0}^{N-1} \left( u(t) - C ( M_{r}^{-1}(y) - y)(t)  \right)^{2}
\end{equation}
where $\{y(t)\}_{t=0,...,N-1}$ is the measured open-loop plant output when the plant is excited with a user-defined signal $\{u(t)\}_{t=0,...,N-1}$.
The output signal $y$ can be obtained by experiments, making the approach suitable for settings where a plant model is not fully available.

Under the ideal condition of perfect matching $y=y^*$, the idea of VRFT is to
derive a compatible reference signal $r = M_r^{-1} y^* = M_r^{-1} y$, which is then used to
obtain the controller $C$ by fitting the input/output pair $(r-y,u)$. 
The inverse reference model $M_r^{-1}$ is realized by a causal approximation.
For linear reference models, the inverse can be removed by filtering of input and output signals via $M_r$ itself, 
which leads to the filtered optimization problem 
\begin{equation} \label{eq:vrft_obj_filter}
    \min_{C\in \mathrm{iFIR}} \frac{1}{N} \sum_{t=0}^{N-1} \left( M_{r}u(t) - C ( y - M_{r}y)(t)  \right)^{2}
\end{equation}

Following \cite{Campi2002}, for linear plants and linear reference models, the minimizer of \eqref{eq:vrft_obj}  also minimizes
\begin{equation}\label{eq:mr_obj}
    \min_{C(z) } \left \| M_r(z) - \frac{P(z)C(z)}{1+P(z)C(z)} \right\|_{2}^{2}
\end{equation}
if the optimal controller of the latter, $C^{*}(z)$, can be represented by an iFIR \eqref{eq:sec2-9-d} (for $N \to \infty$ and if $u$ is given by the realization of a stationary and ergodic stochastic process).
If $C^{*} \notin {\mathrm{iFIR}}$, pre-filtering of the open-loop data $\{u(t),y(t)\}_{t=0,...,N-1}$ with a suitably chosen 
LTI filter can reduce the difference between the two minimizers.
We refer to \cite{Campi2002} for additional details on VRFT.

    The optimization problems in \eqref{eq:vrft_obj} and \eqref{eq:vrft_obj_filter} are equivalent to a least-squares problem and hence can be solved efficiently. For instance, take $e = M_{r}^{-1}(y) - y$. From \cite[Eq. 7.3]{pillonetto2022BOOK}, the optimization problem \eqref{eq:vrft_obj} (or \eqref{eq:vrft_obj_filter}) with the controller \eqref{eq:sec2-9-d} reads
    \begin{equation} \label{eq:vrft_obj_ls}
        \min_{g\in \mathbb{R}^{m}} \left \|  u - Eg \textcolor{black}{-} \gamma E_{\mathrm{int}}       \right\|_{2}^{2}
    \end{equation}
    where 
    $\small{
    g =\begin{bmatrix}
        g_0  & \dots & g_{m-1} \\
    \end{bmatrix}^T}
    $,
    $\small{
    u = \begin{bmatrix}
        u(0) & \dots & u(N-1) \\
    \end{bmatrix}^T}
    $, 
    $$
        \small{E =
        \begin{bmatrix}
        e(0) & 0 & \dots & \dots  & 0 \\
        e(1) & e(0) & 0 & \dots & 0 \\
        \vdots & \vdots &  &  &  \vdots\\
        e(N-2) & e(N-3) & \dots & \dots & e(N-m-1) \\
        e(N-1) & e(N-2) & \dots & \dots & e(N-m)
    \end{bmatrix}}, 
    $$
    and
    $\small{E_{\mathrm{int}} = T_{s} \begin{bmatrix}
        e(0) & e(0)+e(1) & \dots & \sum_{t=0}^{N-1} e(t)
    \end{bmatrix}^{T}.}$ \textcolor{black}{If the input is persistently exciting, then $E$ will be full rank and a unique solution exists \cite[Section 3.2]{BCE2011}. 
    In what follows, in our extension of VRFT based on passive iFIR controller, uniqueness of solution is not necessary to guarantee closed-loop stability. Therefore, this rank condition will not be considered again.}

\section{Passive iFIR controllers}
\label{sec:passive_iFIR}
\textcolor{black}{Although VRFT is an established and effective design methodology, the data-driven optimal controller \eqref{eq:vrft_obj} \emph{does not guarantee closed-loop stability}. For passive plants $P$, we tackle this issue through passivity.
We combine the least-square fitting \eqref{eq:vrft_obj_ls} with specific constraints that enforce passivity.} This leads to the derivation of the passive iFIR controller that best fits the desired input / output data \eqref{eq:vrft_obj}.

The first formulation is a straightforward application of the Kalman-Yakubovich-Popov (KYP) lemma. 
\begin{theorem}[KYP approach]
\label{thm:KYP}
    Given the iFIR \eqref{eq:sec2-9-d} of order $m$, consider the state-space realization of the FIR part 
    \begin{equation}
    	\begin{cases} \label{eq:sec2-13}
    		\quad x(t+1) = \textcolor{black}{A_{c}}x(t) + \textcolor{black}{B_{c}}\textcolor{black}{w(t)}\\
    		\quad \textcolor{black}{z(t)} = \textcolor{black}{C_{c}}\textcolor{black}{x(t)} + \textcolor{black}{D_{c}}\textcolor{black}{w(t)}
    	\end{cases}
    \end{equation}
    where \textcolor{black}{$w(t),z(t)$ are the input and output of the FIR part.} 
    \textcolor{black}{$A_{c}\in \mathbb{R}^{m-1 \times m-1},B_{c}  \in \mathbb{R}^{m-1} ,C_{c} \in \mathbb{R}^{m-1},D_{c} \in \mathbb{R}$} 
    are given by
    \begin{align}
        \textcolor{black}{A_{c}} &= \begin{bmatrix}
            0 & 1 & 0 &0 &\dots &0 \\
            0 & 0 & 1 &0 &\dots &0 \\
            \vdots & & & & & \\
            0 & 0 & 0 &0 &\dots &1 \\
            0 & 0 & 0 &\dots & &0 
        \end{bmatrix} , \quad  
        \textcolor{black}{B_{c}} = \begin{bmatrix}
            0\\
            0  \\
            \vdots \\
            0 \\
            1
        \end{bmatrix} \nonumber \\
        \textcolor{black}{C_{c}} &= 
        \begin{bmatrix}
            g_{m-1} \quad g_{m-2} \quad \dots \quad g_{2} \quad g_{1}
        \end{bmatrix}, \quad 
        \textcolor{black}{D_{c}} = g_{0} \, . \nonumber 
    \end{align}
   \textcolor{black}{The optimal iFIR controller given by \eqref{eq:vrft_obj} 
   constrained to the following LMIs in the unknown 
   $X\in \mathbb{R}^{m-1\times m-1}$, $C_c$, and $D_c$
    \begin{subequations} \label{eq:sec2-15}
        \begin{align}
        	   \gamma & \geq  0 \\
                X=X^T & >  0  \label{eq:sec2-15-1}\\
                \quad \begin{bmatrix}
                X - A_{c}^{T}XA_{c} & C_{c}^{T} - A_{c}^{T}XB_{c} \\
                C_{c}-B_{c}^{T}XA_{c}  & D_{c} + D_{c}^{T} - B_{c}^{T}XB_{c}
                \end{bmatrix} &\geq  0 \, \\
                D_{c} + D_{c}^{T} &\geq 0. \label{eq:sec2-15-2}
        \end{align}
    \end{subequations}
    is passive}, thus guarantees closed-loop stability.
\end{theorem}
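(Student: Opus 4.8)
The plan is to prove passivity of $C$ by exploiting its \emph{parallel} structure: I would decompose the controller into its integrator branch $C_\mathrm{int}(z)=\gamma T_s/(1-z^{-1})$ and its FIR branch $C_\mathrm{FIR}(z)=\sum_{k=0}^{m-1} g_k z^{-k}$, prove positive realness of each branch separately, and then combine them using the fact that positive realness is closed under addition (parallel interconnection). Closed-loop stability follows at the end by the passivity theorem applied to the passive plant $P$ and the passive controller $C$.

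First I would treat the integrator branch directly in the frequency domain, since it does not fit the Schur hypothesis of the LMI-based positive real lemma. Evaluating on the unit circle gives $\mathrm{Re}[C_\mathrm{int}(e^{j\omega})] = \gamma T_s (1-\cos\omega)/|1-e^{-j\omega}|^2 \geq 0$ for every $\omega$ whenever $\gamma \geq 0$ and $T_s>0$, the only pole being the simple pole at $z=1$ with positive residue. This certifies that $C_\mathrm{int}$ is positive real and explains why $\gamma \geq 0$ is the sole constraint imposed on this branch. Next I would show that the FIR branch, realized by $(A_c,B_c,C_c,D_c)$, is positive real: since $A_c$ is nilpotent it is trivially Schur, so the discrete-time positive real (KYP) lemma applies, and the existence of $X=X^T>0$ satisfying the matrix inequality in \eqref{eq:sec2-15} is exactly the certificate that this realization is passive. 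The extra condition $D_c+D_c^T\geq 0$ records nonnegativity of the feedthrough, and is in fact implied by the $(2,2)$ block of the large inequality together with $X>0$, since that block forces $D_c+D_c^T \geq B_c^T X B_c \geq 0$.

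With both branches positive real, passivity of $C=C_\mathrm{int}+C_\mathrm{FIR}$ follows by additivity, because $\mathrm{Re}[C(e^{j\omega})] = \mathrm{Re}[C_\mathrm{int}(e^{j\omega})] + \mathrm{Re}[C_\mathrm{FIR}(e^{j\omega})] \geq 0$ at every frequency and the pole set is just the union of the two (all in the closed exterior of the unit disc). I would then close the argument with the passivity theorem: the negative-feedback interconnection of the passive plant $P$ and the passive controller $C$ is stable, while the integral action simultaneously secures perfect regulation.

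The step I expect to be the main obstacle is the integrator's pole on the unit circle at $z=1$. It prevents a single monolithic KYP/LMI treatment of the whole controller (the LMI form of the positive real lemma wants a strictly Schur $A$), which is precisely why the proof must split $C$ into the marginally passive integrator and the strictly stable FIR part and recombine them in the frequency domain. A related delicacy is the final stability step: feedback of two merely passive systems yields Lyapunov / $\ell_2$ stability under mild additional conditions (e.g.\ output-strict passivity of the plant, or a detectability argument), and I would invoke the passivity theorem in the form that delivers loop stability under the standing assumption that $P$ is passive.
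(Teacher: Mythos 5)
Your proposal is correct and follows essentially the same route as the paper's (much terser) proof: the LMIs \eqref{eq:sec2-15} are the discrete-time KYP/positive-real certificate for the FIR branch, the integrator branch is passive whenever $\gamma\geq 0$, passivity is preserved under parallel interconnection, and closed-loop stability follows from the passivity theorem. Your additional observations --- the explicit frequency-domain check of the backward-Euler integrator, the remark that $A_c$ is nilpotent hence Schur, and the redundancy of \eqref{eq:sec2-15-2} given the $(2,2)$ block and $X>0$ --- are all valid refinements of the argument the paper leaves implicit.
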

\begin{proof}
   \eqref{eq:sec2-15} are the standard matrix inequalities for passivity of discrete LTI systems. These are linear in the \textcolor{black}{unknowns $C_c$ and $X$.} 
   The passivity of the iFIR controller follows from the KYP lemma \textcolor{black}{\cite[Thm. 1]{Wu_Boyd1996}}, recalling that the parallel interconnection of passive systems is passive. 
   Closed-loop stability follows from the passivity theorem.  
\end{proof}

The attractive feature of Theorem \ref{thm:KYP} is that the constrained minimization problem can be handled efficiently
by convex optimization solvers such as CVX, \cite{cvx2,cvx1,cvxpy}.
The main issue of \eqref{eq:sec2-15} is that the number of unknowns scales as $o(m^2)$.
This may cause long computation times and numerical precision issues, leading to convergence issues when
the order of the FIR part is high. The approach proposed below, overcomes these limitations.

Define the Toeplitz matrix of order $n \geq m$ associated with the coefficients $g = \{g_{k}\}_{k=0,...,m-1}$ as follows, \cite{gray2006}:
\begin{equation}  \label{eq:sec2-3}
    \phi_{n}(g) = 
    \begin{bmatrix}
        g_0 & 0 & \dots & \dots & 0 \\
        g_1 & g_0 & 0 & \dots & 0 \\
        \vdots & \vdots & \ddots & \vdots & \vdots \\
        g_{m-1} & g_{m-2} & \dots & 0 & 0 \\
        0 & g_{m-1} & \dots & g_0 & 0 \\
        0 & \dots & g_{m-1} & \dots & g_0 \\
    \end{bmatrix} \in \mathbb{R}^{n \times n} .
\end{equation}

The first $n$ samples of the response of the FIR filter given by $g$ can be computed 
as the product between $\phi_n(g)$ and the (vectorized) input 
\textcolor{black}{\small{$w_{n}= \begin{bmatrix}
        w(0) & w(1) & \dots & w(n-1)
    \end{bmatrix}^{T}.$}}
It is thus not surprising that the passivity of the FIR filter is related \textcolor{black}{to} the positive semi-definiteness of 
the Toeplitz matrix $\phi_{n}(g) + \phi_{n}(g)^{T}$, as stated in the following theorem. 
In what follows, the symbol $\underline \sigma$ denotes the minimal singular value of a matrix.

\begin{theorem}[Infinite Toeplitz approach] \label{thm:pptLMI}
    Given the iFIR \eqref{eq:sec2-9-d} of order $m$, let $g = \{g_{k}\}_{k=0,...,m-1}$ be the finite coefficients of its FIR part.
   The iFIR controller given by \eqref{eq:vrft_obj} and by
   \begin{subequations}     
   \label{eq:pptLMI} 
\begin{align}
\gamma & \geq 0 \\
\lim_{n\to\infty} \underline \sigma \! \left( \phi_n^T(g) + \phi_n(g) \right) & \geq 0  \label{eq:infinite_toplitz_phi}
\end{align}
\end{subequations}
     is passive (thus guarantees closed-loop stability).
\end{theorem}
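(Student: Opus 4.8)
The plan is to reuse the structural argument from the proof of Theorem~\ref{thm:KYP}: the iFIR~\eqref{eq:sec2-9-d} is the parallel interconnection of the discrete integrator $\gamma T_s/(1-z^{-1})$ and the FIR filter with coefficients $g$. Since the parallel interconnection of passive systems is passive, and the integrator is passive whenever $\gamma\ge 0$, it suffices to show that constraint~\eqref{eq:infinite_toplitz_phi} makes the FIR part passive; closed-loop stability then follows from the passivity theorem as before.

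First I would rewrite passivity of the FIR filter as a quadratic form in the Toeplitz matrix. For a zero-initialized input $w_n=\begin{bmatrix} w(0) & \dots & w(n-1)\end{bmatrix}^T$ the output samples are $z_n=\phi_n(g)\,w_n$, so the energy supplied over the horizon is
\begin{equation*}
\sum_{t=0}^{n-1} z(t)\,w(t) \;=\; w_n^T \phi_n(g)\, w_n \;=\; \tfrac12\, w_n^T\bigl(\phi_n(g)+\phi_n^T(g)\bigr) w_n .
\end{equation*}
Hence passivity of the FIR filter over every finite horizon is equivalent to $\phi_n(g)+\phi_n^T(g)\ge 0$ (positive semidefinite) for every $n$, where the inequality $\underline\sigma(\cdot)\ge 0$ in~\eqref{eq:infinite_toplitz_phi} is read as nonnegativity of the minimal eigenvalue of this symmetric matrix.

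The crux is to pass from the single limiting inequality~\eqref{eq:infinite_toplitz_phi} to positive semidefiniteness at every finite order. The key observation is that $T_n:=\phi_n(g)+\phi_n^T(g)$ is a symmetric banded Toeplitz matrix and that $T_n$ is the leading $n\times n$ principal submatrix of $T_{n+1}$. By Cauchy's interlacing theorem its smallest eigenvalue is then nonincreasing in $n$, so $\lim_{n\to\infty}\lambda_{\min}(T_n)=\inf_n \lambda_{\min}(T_n)$. Thus~\eqref{eq:infinite_toplitz_phi} forces $\lambda_{\min}(T_n)\ge 0$, i.e.\ $T_n\ge 0$, for all $n$, which by the previous step yields passivity of the FIR filter and, via the parallel-interconnection argument, of the whole iFIR controller.

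I expect the main obstacle to be the rigorous identification of the family of finite-horizon inequalities $T_n\ge 0$ with genuine input/output passivity (positive realness) of the FIR transfer function $G(z)=\sum_{k=0}^{m-1} g_k z^{-k}$. This is where Toeplitz/Szeg\H{o} theory enters: the symbol of $T_n$ is $2\,\mathrm{Re}\,G(e^{j\theta})$, and since $G$ is a trigonometric polynomial the minimal eigenvalue converges to $\essinf_\theta 2\,\mathrm{Re}\,G(e^{j\theta})=\min_\theta 2\,\mathrm{Re}\,G(e^{j\theta})$. The limiting condition~\eqref{eq:infinite_toplitz_phi} is therefore exactly the positive-real condition $\mathrm{Re}\,G(e^{j\theta})\ge 0$, which both secures passivity in the input/output sense and links this formulation to the sampled-frequency characterization of Theorem~\ref{thm:sample_f}. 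Some care is required to confirm that the finite-horizon storage interpretation coincides with the standard passivity definition and that the essinf reduces to the pointwise minimum for this continuous symbol.
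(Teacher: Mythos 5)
Your proposal is correct and follows essentially the same route as the paper: both reduce passivity of the FIR part to the quadratic-form identity $z_n^T w_n = \tfrac12 w_n^T\left(\phi_n(g)+\phi_n^T(g)\right)w_n$ and then pass to $n\to\infty$, with the parallel-interconnection and passivity-theorem arguments inherited from Theorem~\ref{thm:KYP}. Your extra step --- using Cauchy interlacing to show that $\lambda_{\min}\left(\phi_n(g)+\phi_n^T(g)\right)$ is nonincreasing in $n$, so the limit in \eqref{eq:infinite_toplitz_phi} equals $\inf_n \lambda_{\min}$ and therefore forces positive semidefiniteness at every finite horizon --- is a welcome tightening of the paper's more informal passage to the limit, and your reading of $\underline\sigma$ as the minimal eigenvalue of the symmetric matrix matches the paper's intended use.
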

\begin{proof}
    Like the proof of Theorem \ref{thm:KYP}, we just need to show that the FIR part of the filter is passive.
    Consider \textcolor{black}{any} vectorized input \textcolor{black}{\small{$w_{n}= \begin{bmatrix}
        w(0) & w(1) & \dots & w(n-1)
    \end{bmatrix}^{T}$}} and associated output \textcolor{black}{\small{$z_{n}= \begin{bmatrix}
        z(0) & z(1) & \dots & z(n-1)
    \end{bmatrix}^{T}$}} of the FIR part, where $n \geq m$. \textcolor{black}{Denote by $w_{\infty},z_{\infty}$ the related (infinite length) vectors when $n \to \infty$. For passivity, we require $z_{\infty}^{T}w_{\infty} \geq 0$ for all $w_{\infty}$.} For all $n \geq m$, we have
    \begin{equation*}
        z_n^T w_n = w_{n}^{T} \phi_{n}(g)^T w_{n} = w_{n}^{T} \left( \frac{\phi_{n}(g)^T + \phi_{n}(g)}{2} \right ) w_{n} \ .
    \end{equation*}
    \textcolor{black}{Therefore, saying that 
    $z_n^T w_n \geq 0$ for all inputs $w_n$ (and related outputs $z_n$) is equivalent to the matrix inequality
    $$\phi_{n}(g)^T + \phi_{n}(g) \geq 0.$$ 
    Thus, for $n \to \infty$, \eqref{eq:infinite_toplitz_phi} is equivalent to $z_{\infty}^{T}w_{\infty} \geq 0$.}
\end{proof}

The main issue of Theorem \ref{thm:pptLMI} is that \eqref{eq:pptLMI} is not computationally tractable. 
This motivates the following relaxation. 
\begin{theorem}[Finite Toeplitz approach] \label{thm:finitepptLMI}
    Given the iFIR \eqref{eq:sec2-9-d} of order $m$, let $g = \{g_{k}\}_{k=0,...,m-1}$ be the finite coefficients of its FIR part. 
   For any $\epsilon>0$, $\rho_0 > 0$, $0 < \rho \leq 1$, 
   there exists $n^* \geq m$ such that,
   for all $n \geq n^*$, 
    the iFIR controller given by \eqref{eq:vrft_obj} and 
    \begin{subequations}     
     \label{eq:toeplitz_constr}
	\begin{align}
	\gamma &\geq 0 \label{eq:finite_toeplitz_gamma}\\
	\rho_0 &\geq  |g_{0}|  \label{eq:finite_toeplitz_g0}\\
	  \rho_0 \rho^k &\geq |g_k|  \quad \forall k \in \{ 1,...,m-1\} \label{eq:finite_toeplitz_gk}\\
	\phi_n^T(g) + \phi_n(g) & \geq \epsilon I_n \label{eq:finite_toeplitz_phi}
	\end{align}
    \end{subequations}
    is passive.
\end{theorem}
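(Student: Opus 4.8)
The plan is to reduce passivity to positive realness of the symbol of the FIR part, and then to control how quickly the finite Toeplitz sections certify it. Writing $T_n := \phi_n^T(g)+\phi_n(g)$, the same computation as in the proof of Theorem \ref{thm:pptLMI} shows that $T_n$ is the $n\times n$ leading section of the Hermitian banded Toeplitz operator with symbol $f(\theta)=2\sum_{k=0}^{m-1} g_k\cos(k\theta)=2\,\mathrm{Re}\,G(e^{\mathrm{i}\theta})$, where $G(z)=\sum_k g_k z^{-k}$. For any $v\in\mathbb{C}^n$ one has $v^* T_n v=\frac{1}{2\pi}\int_{-\pi}^{\pi} f(\theta)\,|p_v(\theta)|^2\,d\theta$ with $p_v(\theta)=\sum_{p=0}^{n-1} v_p e^{\mathrm{i} p\theta}$, so $\lambda_{\min}(T_N)\ge\mu:=\min_\theta f(\theta)$ for every $N$. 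Consequently $\mu\ge 0$ already forces $T_N\ge 0$ for all $N$ and hence, by Theorem \ref{thm:pptLMI}, passivity. It therefore suffices to prove $\mu\ge 0$.

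Next I would establish the key estimate: an upper bound on $\lambda_{\min}(T_n)$ that matches $\mu$ up to a controllable remainder. Let $\theta^*$ attain $\mu$ and take the Fourier test vector $v=(1,e^{\mathrm{i}\theta^*},\dots,e^{\mathrm{i}(n-1)\theta^*})^T$, so $\|v\|^2=n$. Counting the diagonals of the Toeplitz matrix gives $v^* T_n v=\sum_{|d|\le m-1}(n-|d|)\,c_d\, e^{-\mathrm{i}d\theta^*}=n\mu-\sum_{|d|\le m-1}|d|\,c_d\, e^{-\mathrm{i}d\theta^*}$, where $c_0=2g_0$ and $c_d=c_{-d}=g_{|d|}$, and where I used that $f$ is even to identify $\sum_{|d|\le m-1}c_d e^{-\mathrm{i}d\theta^*}=f(\theta^*)=\mu$. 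The Rayleigh quotient then yields $\lambda_{\min}(T_n)\le \mu+\kappa/n$ with $\kappa:=2\sum_{k=1}^{m-1} k\,|g_k|$. The decay constraints \eqref{eq:finite_toeplitz_gk} bound this remainder uniformly in $g$: $\kappa\le\bar\kappa:=2\rho_0\sum_{k=1}^{m-1} k\,\rho^k$, a constant depending only on $m,\rho_0,\rho$. Producing this uniform, $n$-explicit bound on the finite-section gap is the step I expect to be the main obstacle; it is precisely what the coefficient-decay constraints buy, and it is what lets $n^*$ be fixed before the optimization selects $g$.

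Finally I would assemble the pieces. Constraint \eqref{eq:finite_toeplitz_phi} gives $\lambda_{\min}(T_n)\ge\epsilon$, so combining with the upper bound above, $\mu\ge\lambda_{\min}(T_n)-\bar\kappa/n\ge\epsilon-\bar\kappa/n$. Choosing $n^*:=\max\{m,\lceil\bar\kappa/\epsilon\rceil\}$ then forces $\mu\ge 0$ for every $n\ge n^*$, so the FIR part is positive real and, by Theorem \ref{thm:pptLMI}, passive. Together with $\gamma\ge0$ from \eqref{eq:finite_toeplitz_gamma}, the integrator is passive as well, and the parallel interconnection \eqref{eq:sec2-9-d} of passive systems is passive, which guarantees closed-loop stability. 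As a sanity check I would note that \eqref{eq:finite_toeplitz_g0} does not enter $\kappa$, since the diagonal $d=0$ term is annihilated by the factor $|d|$; that bound is not needed for the rate and only serves to keep the feasible coefficient set bounded.
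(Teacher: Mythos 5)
Your proof is correct, but it takes a genuinely different route through the key technical step. The paper's proof treats $\phi_n^T(g)+\phi_n(g)$ as the finite section of a banded symmetric Toeplitz operator and invokes the Szeg\H{o}-type asymptotic result \cite[Corollary 4.2]{gray2006}, namely $\lim_{n\to\infty}\underline\sigma(\phi_n^T(g)+\phi_n(g))=\essinf_\omega f(\omega)$, and then extracts $n^*$ from the bare definition of limit so that the finite-section eigenvalue is within $\epsilon$ of the limit; it then concludes via Theorem \ref{thm:pptLMI}. You instead prove a quantitative two-sided bound directly: the integral representation of the quadratic form gives $\underline\sigma(T_n)\geq\mu:=\min_\theta f(\theta)$ for every $n$, and the Fourier test vector at the minimizing frequency gives $\underline\sigma(T_n)\leq\mu+\kappa/n$ with $\kappa=2\sum_{k\geq 1}k|g_k|$, so that \eqref{eq:finite_toeplitz_phi} forces $\mu\geq\epsilon-\kappa/n$. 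This buys two things the paper's argument does not make explicit. First, an explicit formula $n^*=\max\{m,\lceil\bar\kappa/\epsilon\rceil\}$ rather than a non-constructive existence claim. Second, and more importantly, uniformity: in the theorem $g$ is the \emph{outcome} of the optimization and varies with $n$, so the $n^*$ obtained from the definition of limit (which is a per-$g$ statement) does not obviously serve all feasible $g$ simultaneously; your bound $\kappa\leq\bar\kappa=2\rho_0\sum_{k=1}^{m-1}k\rho^k$ shows that the decay constraints \eqref{eq:finite_toeplitz_g0}--\eqref{eq:finite_toeplitz_gk} are exactly what makes $n^*$ independent of $g$. This quietly repairs a gap in the paper, which even asserts that $\rho_0$ and $\rho$ ``are not strictly needed for the proof'' --- on your reading they are essential for the uniform choice of $n^*$. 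Your closing observation that the $d=0$ diagonal is annihilated by the factor $|d|$, so that \eqref{eq:finite_toeplitz_g0} plays no role in the rate, is also correct.
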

\begin{proof}
    We show that the FIR part of the controller is passive. 
For $n\to \infty$, the matrix $\phi_{n}(g)^T + \phi_{n}(g)$ is a $(m-1)-banded$ symmetric infinite dimensional Toeplitz matrix, \cite[Thm. 2.1]{strohmer2002},  \cite[p. 37]{gray2006}. Hence, as shown in 
   \cite[Eq. 4.5]{gray2006} and \cite[p. 322]{strohmer2002},
   there exists an associated generating function $f$ given by
    \begin{equation} \label{eq:f}
        f(\omega) = g_0 + \sum_{k=-m+1}^{m-1}g_{k}e^{j2\pi k \omega}
    \end{equation}
    where $\omega \in [-\frac{1}{2},\frac{1}{2})$.
    $\{g_{k}\}_{k=-\infty,...,\infty}$ is absolutely summable.
    Therefore, we can apply \cite[Corollary 4.2]{gray2006} to obtain
    \begin{equation}
        \lim_{n\to\infty} \underline \sigma \! \left(\phi_{n}(g)^T + \phi_{n}(g)\right) = \essinf_{\omega} f(\omega) \leq \infty
    \end{equation}
    where $\essinf$ refers to essential infimum defined in \cite[p. 40]{gray2006}. Since $|f(\omega)| \leq |g_0| + \sum_{k=-m+1}^{m-1} |g_k|$,
    it follows that the limit 
    is bounded.

   From the definition of limit, 
   it follows that for any given $\rho_0$, $\rho$, and $\varepsilon$ satisfying the assumptions of the theorem, there exists $n^*$ such that 
    \begin{equation} \label{eq:convergence}
    \left| \lim_{k\to\infty}\underline \sigma \! \left( \phi_k^T(g) + \phi_k(g) \right) - \underline \sigma \! \left( \phi_n^T(g) + \phi_n(g) \right) \right| \leq \epsilon
    \end{equation}
    for all $n \geq n^*$. Thus, for any given $n\geq n^*$, \eqref{eq:finite_toeplitz_phi} guarantees 
    $\lim_{k\to\infty}\underline \sigma \! \left( \phi_k^T(g) + \phi_k(g) \right) \geq 0$.
    This implies that the FIR part of the iFIR controller is passive.
\end{proof}

$\rho_0$, and $\rho$ are not strictly needed for the proof of the theorem.
However, these user-defined parameters determine the decay rate of the FIR coefficients, and can be used to reduce the Gibbs phenomenon due to
sharp truncation of the impulse response of the FIR part of the controller.
\eqref{eq:finite_toeplitz_phi} is a relaxation
of \eqref{eq:infinite_toplitz_phi} based on
a finite constraint plus an additional bound $\epsilon$.
The latter reduces as $n$ increases (and vice versa). 
Numerical tests show that, for large $n$, the Nyquist diagram of a FIR filter that satisfies $\phi_{n}(g)^T + \phi_{n}(g) \geq 0$ has a large portion of its locus on the right-half of the complex plane. 
To have the locus completely within the right-half plane, we can `shift' the whole Nyquist diagram right using strictly input passivity, \cite{chaffey2023graphical}. 
This is equivalent to taking $ \phi_{n}(g)^T + \phi_{n}(g) \geq \epsilon I$.

Theorem \ref{thm:finitepptLMI} provides a computationally tractable optimization problem. A passive iFIR can be obtained by solving the optimization iteratively, for increasing values of $n$:
(i) we select $\rho_0$, $\rho$ and $\epsilon$ and solve the optimization problem for a given $n\geq m$;
(ii) we test passivity of the iFIR (from the Nyquist plot or using any other test);
(iii) in case of shortage of passivity, we select a larger $n$ (and possibly a larger $\epsilon$) and iterate.

Compared with the KYP approach of Theorem \ref{thm:KYP}, the finite Toeplitz approach of Theorem \ref{thm:pptLMI} 
leads to an optimization problem with $o(m)$ unknowns, linear in the order $m$ of the iFIR filter. \textcolor{black}{However, for a fixed $\epsilon$ (or $n$), the value of $n$ (or $\epsilon$) that leads to passivity is bounded by some finite but unknown  $n^*$ (by some finite $\epsilon^*$), thus
needs to be tuned iteratively.} 
The following theorem overcomes this limitation. Given the coefficients $g={g_k}$, in what follows we use
$G(e^{j\theta})$ to denote its frequency response $\sum_{k=0}^{m-1}g_{k}e^{-jk\theta}$.

\begin{theorem}[Positive realness approach] \label{thm:sample_f}
    Given the iFIR \eqref{eq:sec2-9-d} of order $m$, let $g = \{g_{k}\}_{k=0,...,m-1}$ be the finite coefficients of its FIR part. For any $\rho_{0} > 0$, $0<\rho \leq 1$, $M \geq 2$ the iFIR controller by \eqref{eq:vrft_obj} and 
    \begin{subequations}  
        \label{eq:spr}
    	\begin{align}
        	\gamma &\geq 0 \\
        	\rho_0 &\geq  |g_{0}| \label{eq:samplef_g0}\\
        	  \rho_0 \rho^k &\geq |g_k|  \quad \forall k \in \{ 1,...,m-1\}  \label{eq:samplef_gk}\\
           G\left( e^{ \frac{j q }{M}\pi } \right) + G\left(e^{- \frac{j q }{M}\pi }\right) &\geq \epsilon \quad \forall q \in \{ 0,..., M\}  \label{eq:samplef_G}\\
           \epsilon &= \pi\rho_{0}\frac{1-\rho^{m}}{1-\rho} \frac{m-1}{2M} \label{eq:samplef_epsl}
    	\end{align}
    \end{subequations}
    is passive. 
\end{theorem}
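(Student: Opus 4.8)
The plan is to reuse the reduction already exploited in Theorems \ref{thm:KYP} and \ref{thm:pptLMI}: since the integrator is passive whenever $\gamma \geq 0$ and the parallel interconnection of passive systems is passive, it is enough to prove that the FIR part is passive. For a discrete-time LTI system with real coefficients, passivity is equivalent to positive realness on the unit circle, i.e. to $\mathrm{Re}\,G(e^{j\theta}) \geq 0$ for all $\theta$. Setting
\[
H(\theta) := G(e^{j\theta}) + G(e^{-j\theta}) = 2\sum_{k=0}^{m-1} g_k \cos(k\theta),
\]
the target condition reads $H(\theta)\geq 0$ for all $\theta$. Because the $g_k$ are real, $H$ is even and $2\pi$-periodic, so it suffices to certify $H\geq 0$ on $[0,\pi]$, which is exactly the interval swept by the nodes $\theta_q = q\pi/M$, $q\in\{0,\dots,M\}$, appearing in \eqref{eq:samplef_G}. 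The constraint \eqref{eq:samplef_G} is then precisely $H(\theta_q)\geq\epsilon$ at those nodes.

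The core of the argument is a sampling/interpolation estimate controlling how far $H$ can dip between consecutive nodes. The nodes partition $[0,\pi]$ into subintervals of length $\pi/M$, so every $\theta\in[0,\pi]$ lies within $\pi/(2M)$ of some $\theta_q$. I would bound the variation of $H$ through its derivative, $|H'(\theta)| = |2\sum_{k=1}^{m-1} k g_k \sin(k\theta)| \leq 2\sum_{k=1}^{m-1} k|g_k|$, and then invoke the decay constraints \eqref{eq:samplef_g0}--\eqref{eq:samplef_gk}, namely $|g_k|\leq \rho_0\rho^k$, to obtain $\max_\theta|H'(\theta)|\leq 2\rho_0\sum_{k=1}^{m-1} k\rho^k$. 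Hence, for the nearest node $\theta_q$,
\[
H(\theta)\ \geq\ H(\theta_q) - \max_\theta|H'(\theta)|\cdot \tfrac{\pi}{2M}\ \geq\ \epsilon - \tfrac{\pi}{M}\rho_0\sum_{k=1}^{m-1}k\rho^k .
\]

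It then remains to verify that the closed-form margin \eqref{eq:samplef_epsl} absorbs this worst-case dip. Using $\tfrac{1-\rho^m}{1-\rho}=\sum_{k=0}^{m-1}\rho^k$, \eqref{eq:samplef_epsl} reads $\epsilon = \tfrac{\pi\rho_0}{M}\tfrac{m-1}{2}\sum_{k=0}^{m-1}\rho^k$, so the required inequality reduces to the elementary bound $\sum_{k=1}^{m-1} k\rho^k \leq \tfrac{m-1}{2}\sum_{k=0}^{m-1}\rho^k$ for $0<\rho\leq 1$. I would prove this by the symmetric pairing $k \leftrightarrow m-1-k$: grouping the corresponding terms of $\sum_{k}(\tfrac{m-1}{2}-k)\rho^k$ gives $(\tfrac{m-1}{2}-k)(\rho^k-\rho^{m-1-k})\geq 0$, since for $\rho\leq 1$ the two factors share the same sign (equivalently, $\sum k\rho^k/\sum\rho^k$ is the mean of a distribution skewed toward small $k$, hence bounded by the uniform mean $(m-1)/2$). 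This yields $H(\theta)\geq 0$ on $[0,\pi]$, hence everywhere, so $G$ is positive real and the FIR part is passive; combined with the passive integrator this gives a passive iFIR and, by the passivity theorem, closed-loop stability.

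I expect the main obstacle to be the quantitative matching in the last paragraph rather than the overall architecture. Reducing to the FIR part and sampling positive realness is routine, but showing that the \emph{specific} $\epsilon$ of \eqref{eq:samplef_epsl} is exactly large enough requires two ingredients to align: the coefficient decay bounds must first turn $\sum_k k|g_k|$ into a geometric sum, and the pairing inequality must then replace $\sum_k k\rho^k$ by $\tfrac{m-1}{2}\sum_k\rho^k$. A secondary point to get right is the factor $\tfrac12$ coming from ``nearest node within $\pi/(2M)$,'' which must match the $\tfrac{m-1}{2M}$ in \eqref{eq:samplef_epsl}.
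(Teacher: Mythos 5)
Your proposal is correct and follows the same overall architecture as the paper's proof: reduce to positive realness of the FIR part, sample the condition at the $M+1$ nodes $\theta_q = q\pi/M$, and show that the margin $\epsilon$ in \eqref{eq:samplef_epsl} absorbs the worst-case inter-sample variation. The one substantive difference is in how that variation is bounded, and it is worth flagging because your version is actually the tighter and more self-consistent one. The paper works with $f(\theta)=\sum_{k=0}^{m-1}g_k\cos(k\theta)$ and bounds $|f(\theta+\Delta)-f(\theta)|\leq\sum_k|g_k|\,|k\Delta|\leq(m-1)|\Delta|\sum_k|g_k|$, i.e.\ it replaces every $k$ by the crude $m-1$; it then identifies $G(e^{j\theta})+G(e^{-j\theta})$ with $f(\theta)$, silently dropping the factor $2$ in $G(e^{j\theta})+G(e^{-j\theta})=2\sum_k g_k\cos(k\theta)$. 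You instead keep the factor $2$ explicitly in $H=2f$ and compensate with the sharper weighted bound $\sum_{k=1}^{m-1}k\rho^k\leq\frac{m-1}{2}\sum_{k=0}^{m-1}\rho^k$, proved by the pairing $k\leftrightarrow m-1-k$. These two refinements cancel exactly, so you land on the same $\epsilon$ as \eqref{eq:samplef_epsl} while reading the constraint \eqref{eq:samplef_G} literally (as a bound on $2f(\theta_q)$, not on $f(\theta_q)$); taken at face value, the paper's accounting would only yield $f(\theta)\geq-\epsilon/2$, and your pairing inequality is precisely the ingredient needed to close that factor-of-two gap. Two small points to tidy up: state that for $\rho=1$ the ratio $\frac{1-\rho^m}{1-\rho}$ is to be read as its limit $m$, and note that the mean-value step uses differentiability of $H$ on the closed interval between $\theta$ and its nearest node, which is immediate since $H$ is a trigonometric polynomial.
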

\begin{proof}    
    We focus on the FIR part of the filter, to show that \eqref{eq:spr}
    implies positive realness, i.e. $G(e^{j\theta}) + G(e^{-j\theta}) \geq 0$ 
    for all $\theta$. The latter is equivalent to 
    \begin{equation}
        \sum_{k=0}^{m-1}g_{k}\cos(k\theta) \geq 0 \quad \forall \theta \in [0,\pi]. \label{eq:lc}
    \end{equation}

    As first step, we need the following bound.
    Take $f(\theta) = \sum_{k=0}^{m-1}g_{k}\cos(k\theta)$. 
    For all $\theta$ and $\Delta$, we have
    \begin{align}
        |f(\theta + \Delta) - f(\theta)|  
        & \leq \sum_{k=0}^{m-1} |g_{k}| \left| \cos(k\theta + k\Delta) - \cos(k\theta) \right|  \nonumber\\
        & \leq \sum_{k=0}^{m-1} |g_{k}| \left|k \Delta \right| \nonumber\\
        & \leq (m-1)|\Delta| \sum_{k=0}^{m-1} |g_{k}|.  \label{eq:samplef_1}
    \end{align}
    
    We now use \eqref{eq:spr}. Let us sample $f$ at $M + 1$ points over the domain $[0,\pi]$ with a sampling interval $\frac{1}{M}$. That is, consider $f\left(\frac{q}{M}\pi\right)$ for $q\in \{0,..., M\}$. Then, for any given $\theta \in [0,\pi]$, find the closest sample 
    point $\frac{q}{M}\pi$ to $\theta$ and take $\Delta = \theta - \frac{q}{M}\pi $.
    By construction, $-\frac{\pi}{2M} \leq \Delta \leq \frac{\pi}{2M}$ and
    \eqref{eq:samplef_1} guarantees that 
    \begin{align}
        \left|f(\theta) - f\left(\frac{q}{M}\pi\right)\right|
         &= \left|f\left(\frac{q}{M}\pi  + \Delta\right) - f\left(\frac{q}{M}\pi\right)\right|  \nonumber \\
         &\leq (m-1) |\Delta| \sum_{k=0}^{m-1} |g_{k}| \nonumber \\
         &\leq \frac{m-1}{2M} \pi \sum_{k=0}^{m-1} |g_{k}| \nonumber \\
         &\leq \frac{m-1}{2M} \pi \rho_{0}\frac{1-\rho^m}{1-\rho}, \label{eq:samplef_2}
    \end{align}
    where the last inequality follows from \eqref{eq:samplef_g0} and \eqref{eq:samplef_gk}, 
    since  $ \sum_{k=0}^{m-1} |g_{k}| \leq \sum_{k=0}^{m-1}\rho_{0}\rho^{k}$.

    Thus, from \eqref{eq:samplef_2}, it is easy to see that 
    \eqref{eq:samplef_G} and \eqref{eq:samplef_epsl}
    guarantee $G(e^{j \theta} ) + G(e^{-j \theta} ) = f(\theta) \geq 0$.
\end{proof}

Theorem \ref{thm:sample_f} follows closely \cite[Section III.B]{prakash2022} 
\textcolor{black}{and \cite[Section 2]{Wu_Boyd1996} but \emph{guarantees} passivity with a finite length of training data and a finite number of constraints in the context of data-driven control design. Compared with Theorem \ref{thm:finitepptLMI}, Theorem \ref{thm:sample_f} leads to a one-shot design approach.}
The theorem illustrates that passivity can be enforced by constraining the sampled frequency response of the FIR to be larger than a certain constant $\epsilon$, which depends on the decay rate of the impulse response, the order of the FIR part, and the number of sampling points in the frequency domain. This is equivalent to verifying that the sampled Nyquist diagram of the FIR subcontroller has a real part greater than $\epsilon$. In this case, $\epsilon$ bounds the maximal variation of the real part of the Nyquist diagram between each two samples.

\textcolor{black}{From the proof of Theorem \ref{thm:sample_f}, it can be seen that the bound \eqref{eq:samplef_epsl} on $\epsilon$ is conservative. Conservativeness is
reduced by increasing the number of samples $M$, that is, 
by enforcing a larger number of constraints. 
}
The trade-off between the computational complexity and conservativeness can be mitigated by replacing the
\textcolor{black}{bound} \eqref{eq:samplef_epsl} with an heuristic search over $\epsilon$, based on an iteration similar to the one derived from Theorem \ref{thm:finitepptLMI}.
In terms of complexity, Theorem \ref{thm:sample_f} is more efficient than Theorems \ref{thm:KYP} and \ref{thm:finitepptLMI} since there are no LMIs involved. Furthermore, \eqref{eq:samplef_G} can be implemented with $M + 1$ linear inequality constraints of the form \eqref{eq:lc}, for $\theta = \frac{q \pi }{M}$
and $q \in \{0, \dots, M\}$.

\begin{remark} \label{remark_existence_solution}
    \textcolor{black}{The constrained optimization \eqref{eq:vrft_obj} subject to passivity constraints is always feasible. For instance, for any $\epsilon^{*} \in \mathbb{R}$ and $\gamma \geq 0$, consider the iFIR controller in \eqref{eq:sec2-9-d} given by $g_{0} \geq \epsilon^{*}$ and $g_{k} = 0, \forall k > 0$.
    If $\epsilon^{*} \geq 0$, this iFIR controller is compatible with \eqref{eq:sec2-15}. If $\epsilon^{*}$ is larger than the $\epsilon$ chosen in \eqref{eq:toeplitz_constr} or \eqref{eq:samplef_epsl}, then this iFIR controller is compatible with \eqref{eq:toeplitz_constr} or \eqref{eq:spr}, respectively.}
\end{remark}
\begin{remark}
    \textcolor{black}{The phase correction that a passive linear controller can introduce is limited to $\pm 90$ degrees. This poses a limit on the achievable performance. The level of conservativeness depends on the performance requirement, based on the selected reference model $M_r$. }
\end{remark}
\begin{figure}[htbp]
\vspace{-2mm}
    \begin{center}
        \includegraphics[width=0.44\columnwidth]{ 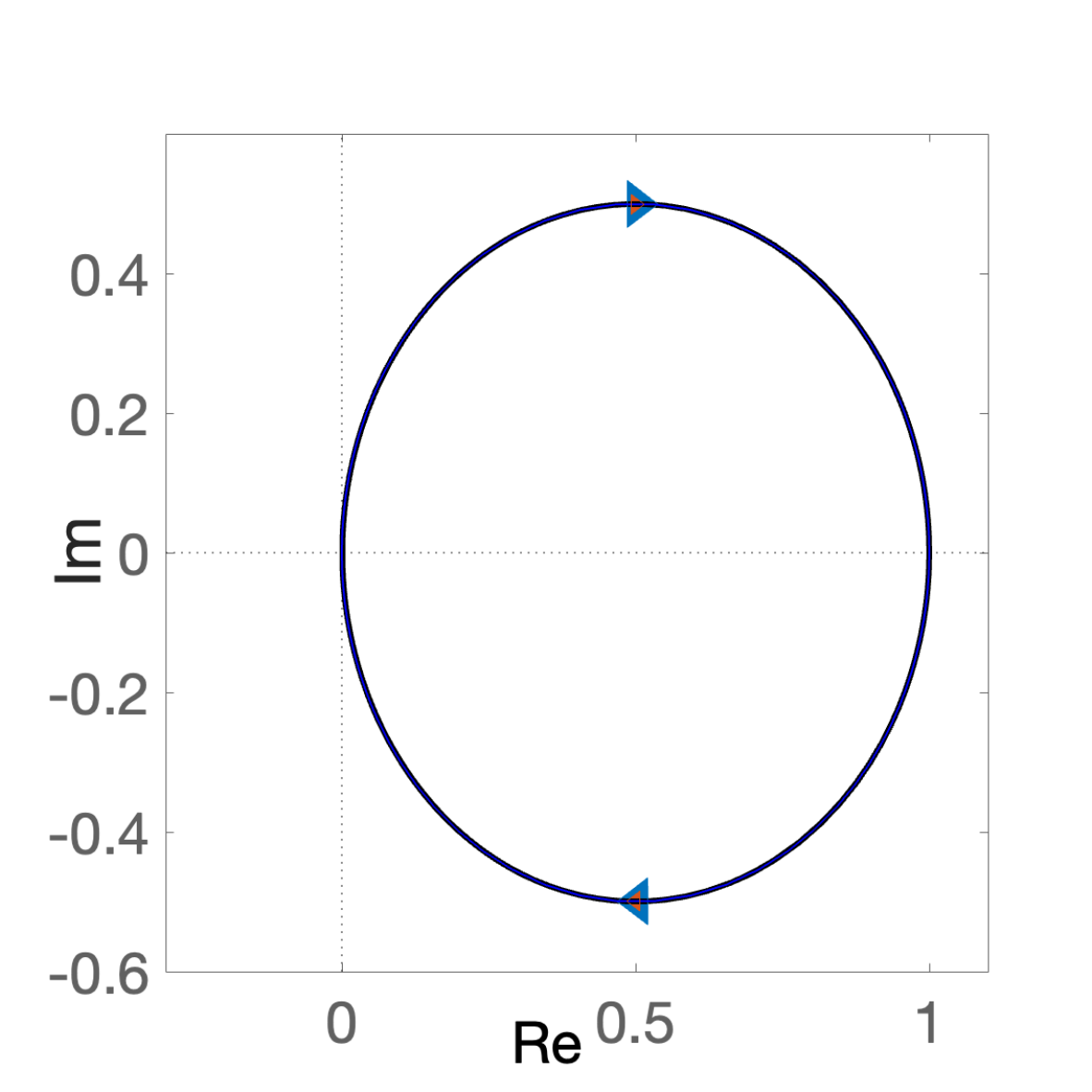}
        \includegraphics[width=0.44\columnwidth]{ 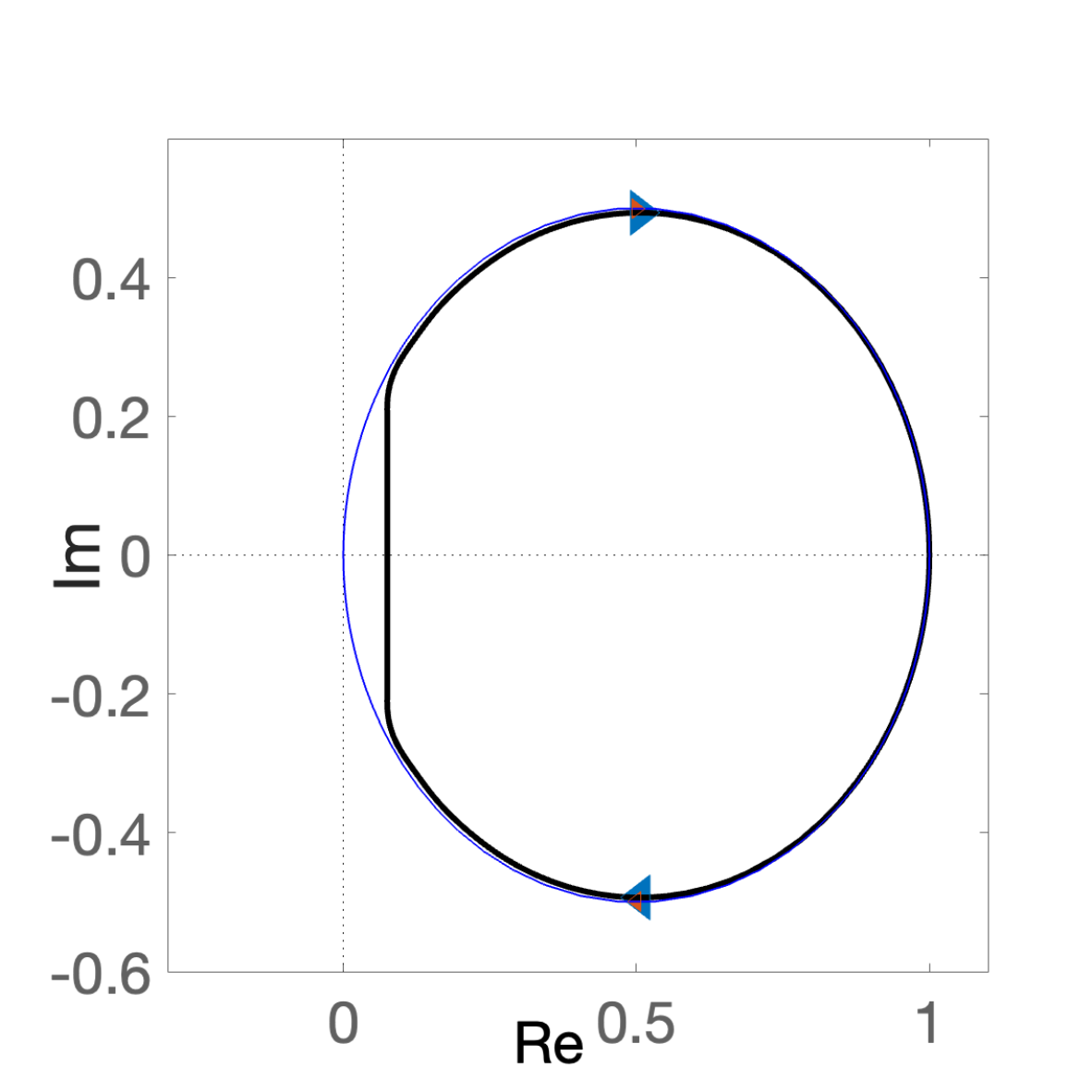}    
        \includegraphics[width=0.44\columnwidth]{ 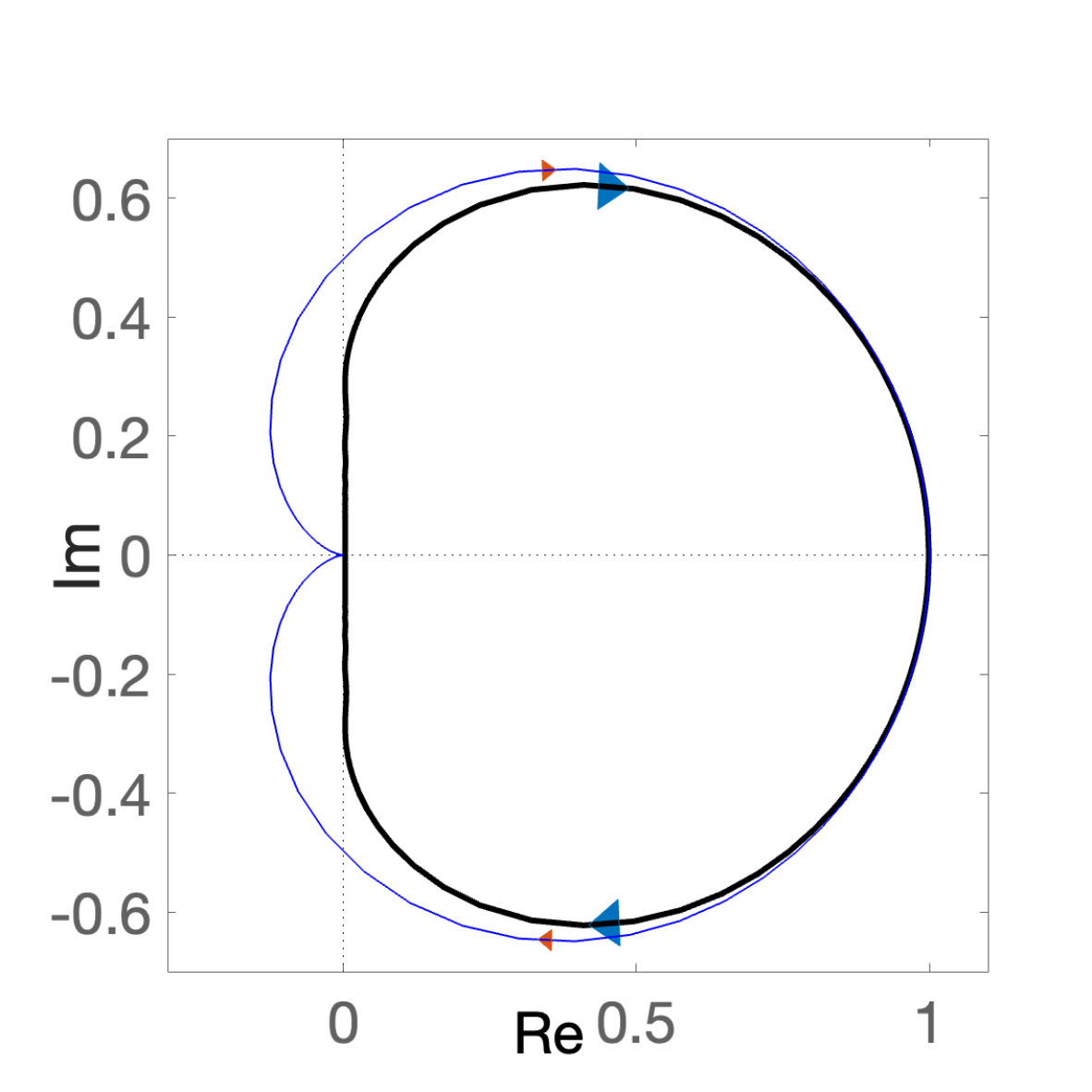}
        \includegraphics[width=0.44\columnwidth]{ 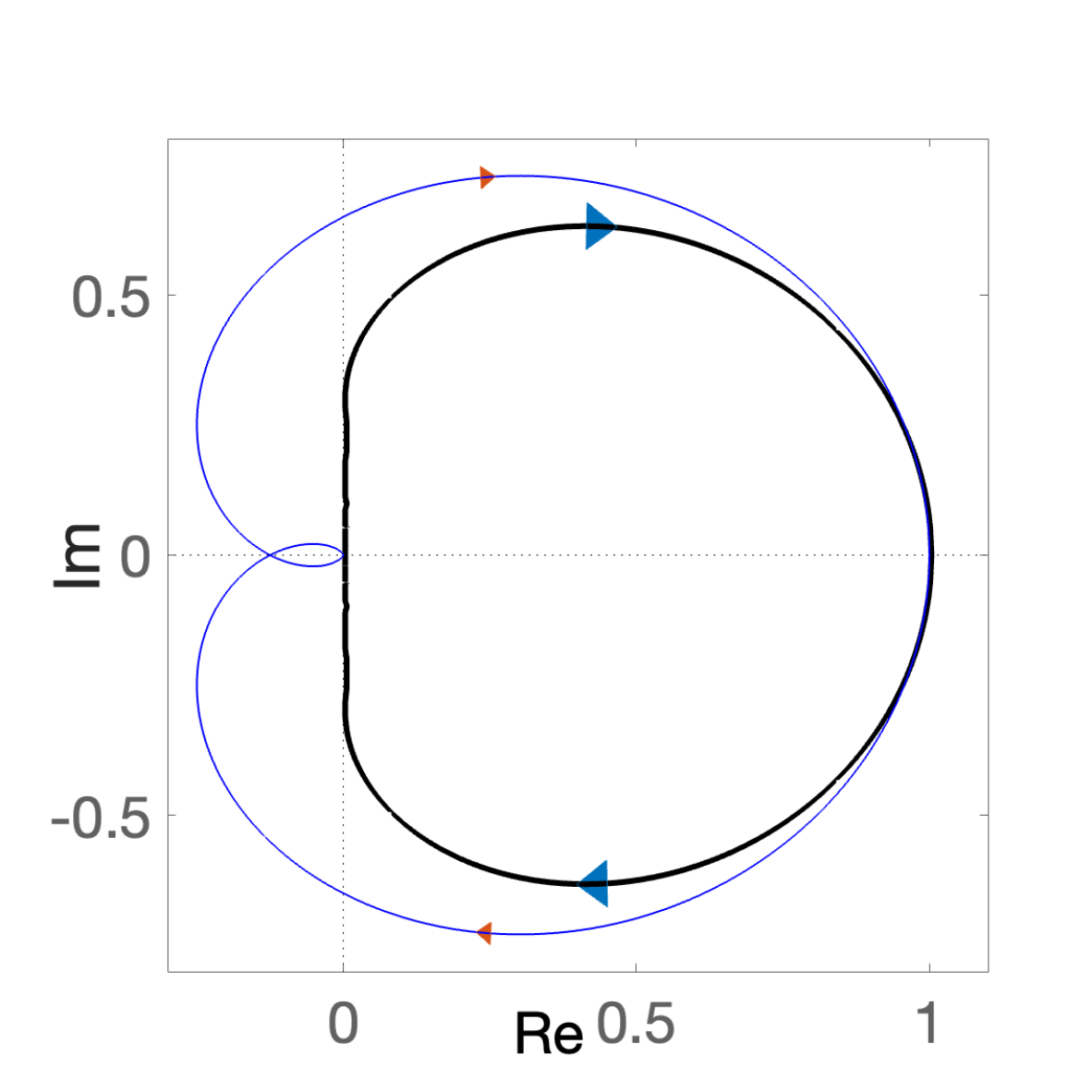}
        \vspace{-2mm}
        \caption{Nyquist diagrams of Example \ref{sec:example1}. 
        \textcolor{black}{B}lue - target filter; black - identified passive FIR filter.
        Top: $C_1$ (left), $C_1$ and excess of passivity (right).
        Bottom: $C_2$ (left), $C_3$ (right).}
        \label{fig:nyquist_sysid}
    \end{center} \vspace{-7mm}
\end{figure}

\section{Examples}
\label{sec:example}
\subsection{Fitting of passive FIR filters}
\label{sec:example1}
We identify passive FIR filters to illustrate
the effectiveness of \eqref{eq:sec2-15}, \eqref{eq:toeplitz_constr} and \eqref{eq:spr} 
in dealing with the targets
\begin{equation}
C_{q}(s) = \frac{1}{(0.5s+1)^q}  \qquad q \in\{1,2,3\},
\end{equation}
where $C_{1}$ is passive, and $C_2$ and $C_3$ show increase shortage of passivity.
For each target, we take $\gamma=0$ and  \eqref{eq:vrft_obj_ls} 
paired with passivity constraints.
\textcolor{black}{The input/output data $(e,u)$ are generated as $u = C_ie$
for $i \in \{1,2,3\}$.  
The probing signal is a filtered step given by $e = \mathcal{L}^{-1}\left(\frac{1}{0.2s+1}\frac{1}{s}\right)$, where $\mathcal{L}^{-1}$ is the inverse Laplace transform.
}
Results are shown in Figure~\ref{fig:nyquist_sysid}. The top-left plot
shows that all three approaches give perfect fitting. The top-right
plot shows the results for conservative $\epsilon$ \eqref{eq:finite_toeplitz_phi}.
Indeed, this parameter can be used to enforce an excess of passivity,
to compensate for shortages in the plant. The left and right bottom plots are related to the non-passive targets $C_{2}$ and $C_{3}$. 
All three approaches achieve a passive FIR filter. As expected,
the match with the target systems degrades as the passivity shortage increases. \textcolor{black}{These results are obtained using CVXPY \cite{cvxpy} with MOSEK solver \cite{mosek}.} 
\subsection{Velocity control of a compliant mechanical system}
\label{sec:example2}
\textcolor{black}
{Consider the two-cart plant of Figure~\ref{fig:two_carts}, modeling a truck pulling a heavy load. 
The cart masses are $m_{1} =  3.0 \mbox{ kg}, m_{2} = 0.5 \mbox{ kg}$. $k_{12}= 1.0$ Ns/m is the spring stiffness and $c_{12} = 1.05$ Ns/m is the damping coefficient.} Each cart is also affected by additional dissipation of the form $c \dot{x}_i$, where $c=0.5$ Ns/m and $\dot{x}_i$ is the absolute velocity of the cart $i \in \{1,2\}$. 
The block diagram for this closed-loop system is represented in Figure~\ref{fig:sec2-1}. $u$ is the applied force and $y$ is the velocity output. 
\begin{figure}[htbp]
\vspace{-2mm}
    \begin{center}
        \includegraphics[width=0.6\columnwidth]{ 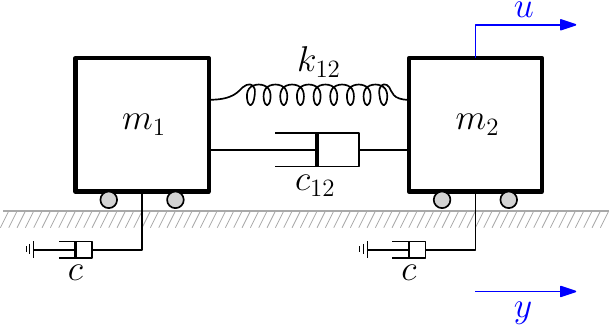}  
        \vspace{-2mm}  
        \caption{A compliant two-cart system.} 
        \label{fig:two_carts} 
    \end{center}  
\end{figure}
\vspace{-2mm}

We consider 
\textcolor{black}{the two reference models $M_{r1}$ and  $M_{r2}$}
\begin{align*}
    \textcolor{black}{M_{ri}(s)} &\textcolor{black}{= \frac{1+0.1s}{1+2\zeta sT_{i} + s^2T_{i}^2}
    \qquad i = 1,2}
\end{align*}
\textcolor{black}{where $\zeta = 1$, $T_{1} = 0.25$s, $T_{2} = 0.125$s. $M_{r2}$ is more aggressive because it has a higher cutoff frequency.} The plant and the reference model\textcolor{black}{s} are discretized
with \textcolor{black}{the zero-order-hold method at} a sampling time \textcolor{black}{$T_{s}$} of $0.05$s. The plant is excited in open-loop with \textcolor{black}{$u(t) = \sum_{i=1}^{10} \sin(\omega_{i}T_{s}t)$} where $\omega_{i}$ linearly spans the frequency range of $[0.5,10]$ rad/s, and \textcolor{black}{$t \in \{0,1,...,2000\}$}. \textcolor{black}{We assume that there is no noise in the input and output of the plant. Standard techniques such as instrumental variables \cite[Section 4]{Campi2002} can be used in the noisy case.}
The data $(e,u)$ for the controller are obtained as discussed in Section \ref{sec:VRFT}
and filtered with the filter\textcolor{black}{s $M_{r1}(z)$ and $M_{r2}(z)$.} 

\textcolor{black}{For $M_{r1}$}, the computation times for several iFIR controllers of order $m\in\{50,150,250,350\}$ are summarized in Table \ref{tb:speed}. \textcolor{black}{The times refer to the solution of the optimization problem on a standard Apple Silicon M2. Compilation times are excluded.} For $m=350$, the KYP approach takes more than one hour. This shows the advantage of 
\eqref{eq:toeplitz_constr} and \eqref{eq:spr} over \eqref{eq:sec2-15}. \vspace{-2mm}

\begin{table}[htbp]
    \begin{center}
        \begin{tabular}{c|cccc} 
                 &  m=50 & m=150 & m=250 & m=350   \\ \hline
            \textcolor{black}{Thm. \ref{thm:KYP}} & \textcolor{black}{0.24} s& \textcolor{black}{54.03} s & \textcolor{black}{705.44}  s & $/$  \\ \hline
            \textcolor{black}{Thm. \ref{thm:finitepptLMI}}, $n=m$ & 0.07 s& 0.51 s & 2.81 s &4.88 s  \\ \hline
            \textcolor{black}{Thm. \ref{thm:finitepptLMI}}, $n=1.5m$ & 0.12 s& 1.88 s & 5.21 s & 12.18 s  \\ \hline
            \textcolor{black}{Thm. \ref{thm:finitepptLMI}}, $n=2m$ & 0.17 s & 3.36 s & 10.62 s & 61.14 s  \\ 
            \hline
            \textcolor{black}{Thm. \ref{thm:sample_f}}, $M=m$ & 0.05 s& 0.14 s & 0.17 s & 0.25 s  \\ \hline
            \textcolor{black}{Thm. \ref{thm:sample_f}}, $M=1.5m$ & 0.04 s& 0.14 s & 0.17 s & 0.24 s  \\ \hline
            \textcolor{black}{Thm. \ref{thm:sample_f}}, $M=2m$ & 0.05 s & 0.14 s & 0.20 s & 0.26 s  \\ 
        \end{tabular} 
         \caption{\textcolor{black}{The computation times of passive iFIR design.}} \label{tb:speed} \vspace{-6mm}
    \end{center}
\end{table}

The Nyquist plots of the FIR part of the iFIR controllers 
are shown in Figure~\ref{fig:sec21-2}. \textcolor{black}{Since plots are similar, only the Toeplitz case is shown for reasons of space.} We have taken $m=350$, $n=2m$ for \eqref{eq:toeplitz_constr} and $M=2m$ for \eqref{eq:spr}. $\epsilon$ is chosen heuristically. 
On the right-part of the complex plane, both passive iFIR controllers (black) show excellent matching with the optimal unconstrained ones (blue) .

\begin{figure}[htbp]
    \begin{center}
        \includegraphics[width=0.49\columnwidth]{ 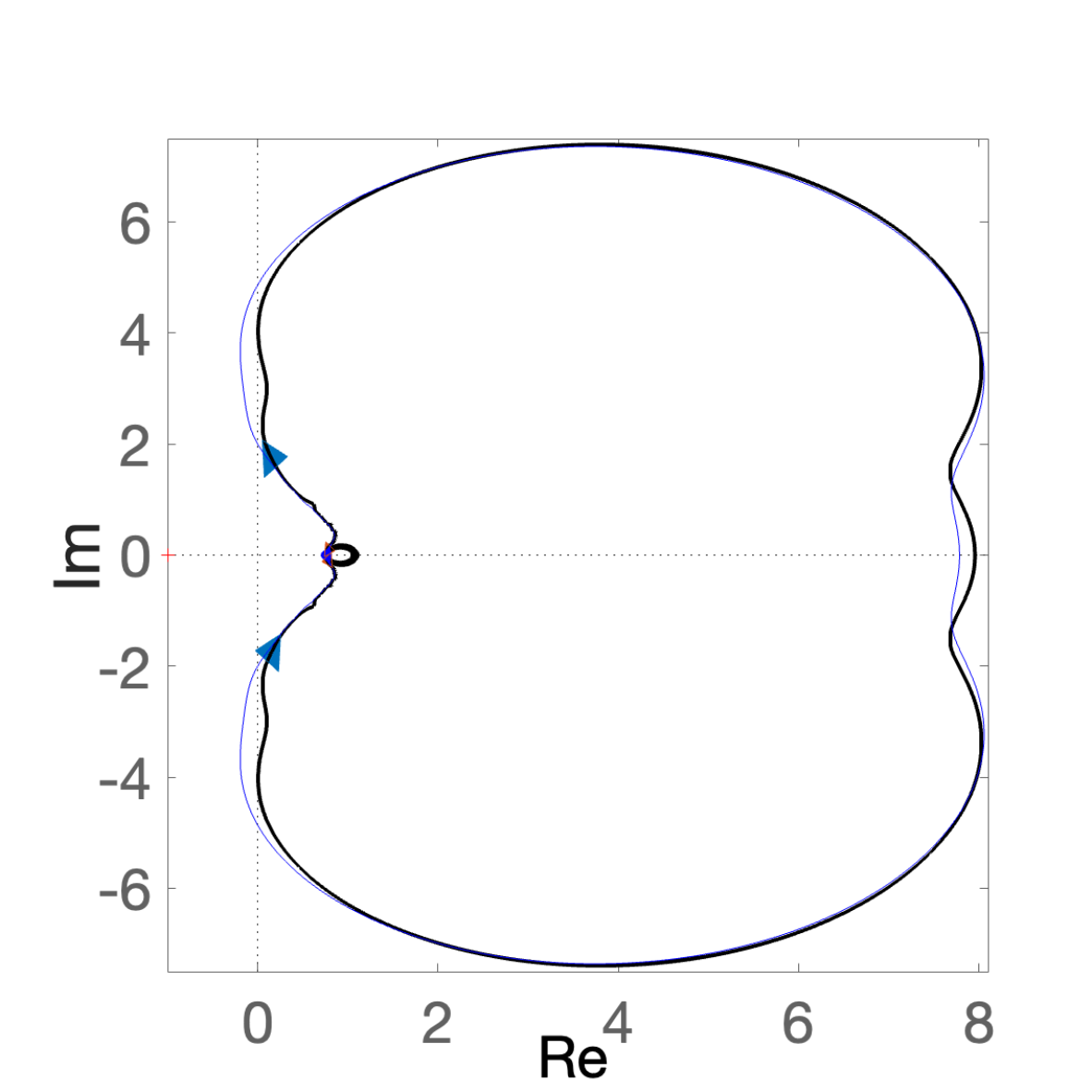}
        \includegraphics[width=0.49\columnwidth]{ 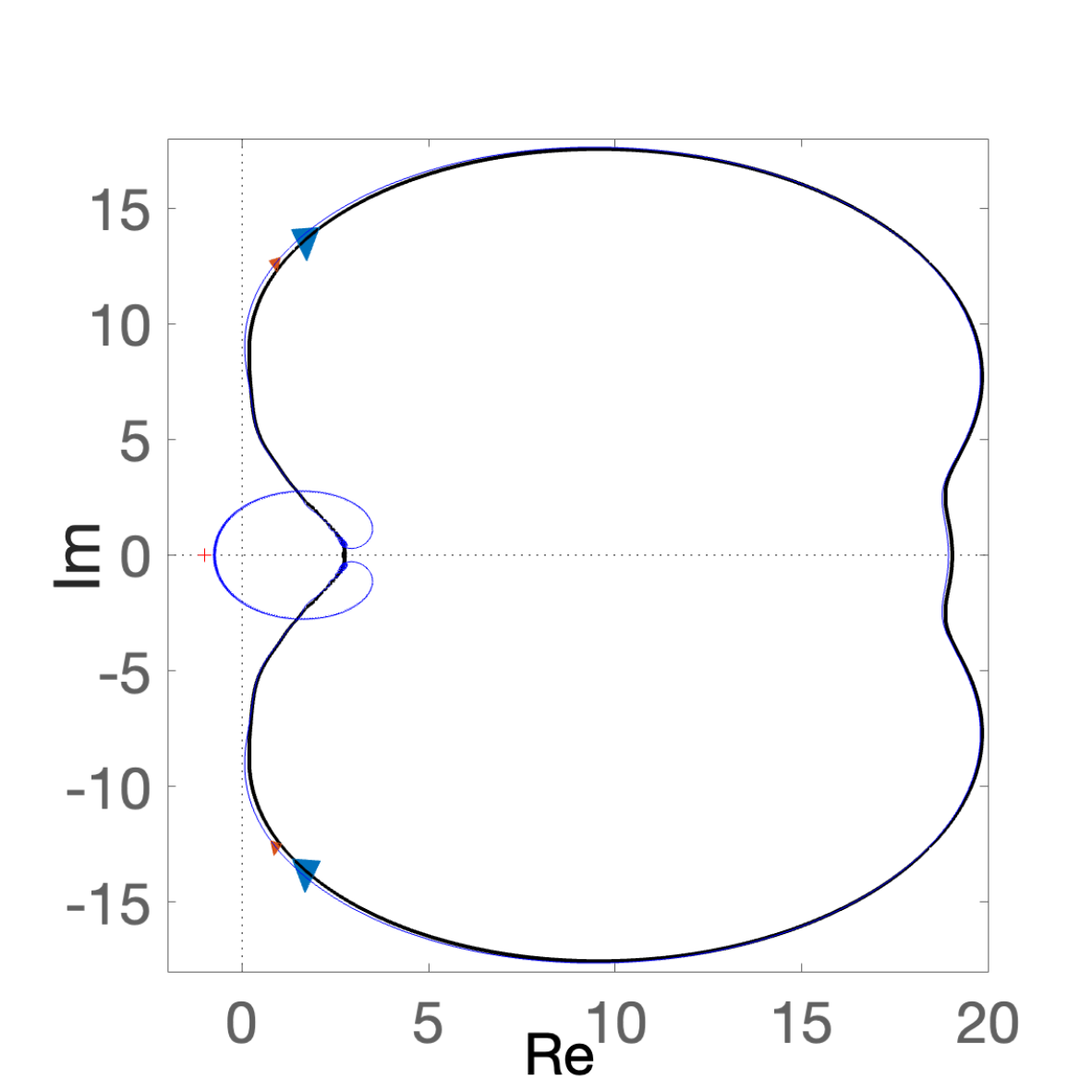}  
        \vspace{-4mm}
        \caption{Nyquist plots of the FIR part of the iFIR controller. \textcolor{black}{Blue - unconstrained \eqref{eq:vrft_obj}; black - constrained \eqref{eq:vrft_obj}, \eqref{eq:toeplitz_constr}. 
        Left: $M_{r1}$; Right: $M_{r2}$.}
        } 
        \label{fig:sec21-2} 
    \end{center} 
\end{figure}

The iFIR controllers are tested in simulation, as shown in Figure~\ref{fig:sec21-3}.
The figure shows a comparison between passive iFIR controllers and \textcolor{black}{the optimal PID controller 
$C(z) = K_{p} +K_{d}\frac{T_{s}}{1-z^{-1}} + K_{i}\frac{z-1}{zT_{s}} $ where $[K_{p},K_{d},K_{i}]$ correspond to $[0.8051,4.4090,0.0068]$ and $[2.6142,10.2330,0.0232]$ for $M_{r1}$ and $M_{r2}$, respectively. The PID gains are constrained to be non-negative (for passivity) and tuned using VRFT.
For space reasons, we do not develop a comparison with non-passive iFIR controllers and PID controllers}. 

\begin{figure}[htbp]
    \begin{center}
        \includegraphics[width=0.49\columnwidth]{ 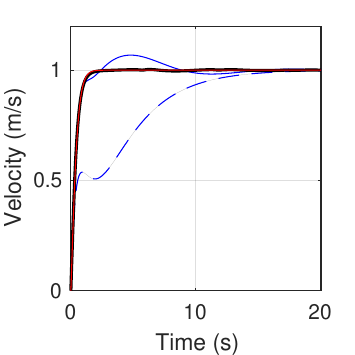}
        \includegraphics[width=0.49\columnwidth]{ 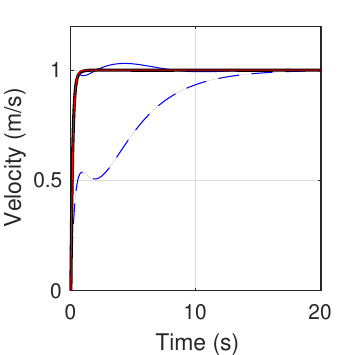} 
        \vspace{-6mm}
        \caption{Step responses. Red: target response. Black: closed-loop response with \textcolor{black}{passive} iFIR controller \textcolor{black}{ \eqref{eq:vrft_obj}, \eqref{eq:toeplitz_constr}.}
        Blue-continuous: closed-loop response with PID. Blue-dashed: open loop. \textcolor{black}{Left: $M_{r1}$; Right: $M_{r2}$.} 
        }  
        \label{fig:sec21-3}
    \end{center} 
\end{figure}

The closed-loop response of the two iFIR controllers is superior, with
an almost exact matching to the target model. 
This is further illustrated by the Bode diagrams of the closed-loop transfer functions, in Figure~\ref{fig:bode}.

\begin{figure}[htbp]
\vspace{-6mm}
    \begin{center}
        \includegraphics[width=0.49\columnwidth]{ 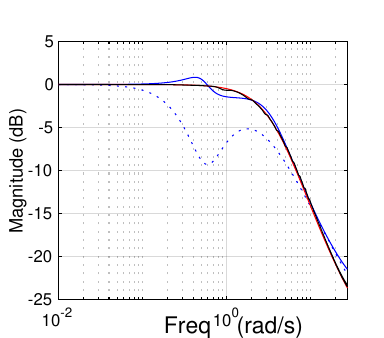} 
        \includegraphics[width=0.49\columnwidth]{ 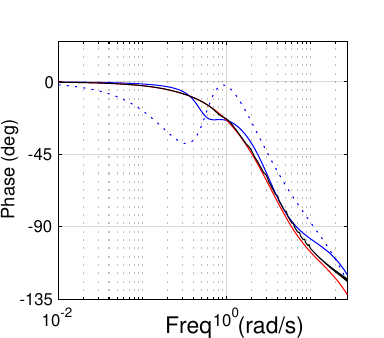} 
        \caption{Bode diagrams \textcolor{black}{when $M_{r1}$ is used. Left: magnitude plot. Right: phase plot.} 
        Red: target reference model. 
        Black:  iFIR-based \eqref{eq:vrft_obj},
        \textcolor{black}{\eqref{eq:toeplitz_constr}} closed loop.
        Blue-continuous: PID based closed loop.
        \textcolor{black}{Blue-dotted:} plant. \textcolor{black}{Similar results hold for $M_{r2}$.}}
        \label{fig:bode}
    \end{center} \vspace{-8mm}
\end{figure}

We also design and test the iFIR controller on a nonlinear passive plant. 
We replace the linear spring between the carts with a piece-wise linear one. The force-displacement characteristic has
slope $1$ for small displacements ($|\Delta x| \leq 0.5$) and $2$ for large displacements ($|\Delta x| \geq 0.5$).
To deal with the nonlinearity, we reduce the aggressiveness of the 
reference model, by taking 
\textcolor{black}{ the reference model to be $\frac{1+0.1s}{1+2\zeta sT + s^2T^2},\zeta = 1$, $T =1$s}. The iFIR controller is designed using \eqref{eq:vrft_obj} and \eqref{eq:spr}, based on nonlinear plant data. 
Performance degrades but stability is guaranteed. 
The iFIR design  outperforms the optimal PID,
as shown in Figure \ref{fig:time_nl}.

\vspace{-3mm} 
\begin{figure}[htbp]
    \begin{center}
        \includegraphics[width=0.7\columnwidth]{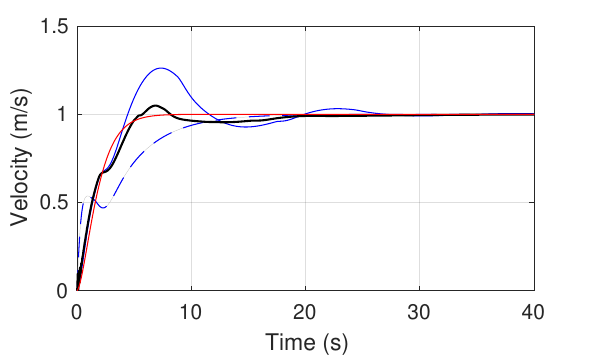}
        \includegraphics[width=0.22\columnwidth]{ 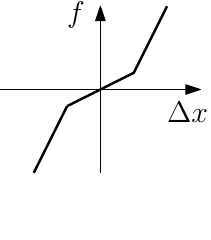} \vspace{-2mm} 
        \caption{Left: step responses of the controlled nonlinear plant. 
        Red: target response. 
        Black: nonlinear closed-loop response with iFIR controller. 
        Blue-continuous: nonlinear closed-loop response with PID. 
        Blue-dashed: nonlinear open loop. Right: nonlinear spring force-displacement characteristics.} 
        \label{fig:time_nl}
    \end{center} \vspace{-6mm}
\end{figure}

\section{Conclusions}

We have discussed a new data-driven approach for passive iFIR controllers.
Their design is based on convex optimization, combining virtual reference feedback tuning with passivity constraints. Our approach does not require any plant model or large datasets, and the plant can be nonlinear. For passive plants, the stability of the closed loop is guaranteed by construction. 
\textcolor{black}{Future work will extend the design to MIMO systems and to nonlinear controllers. }

\bibliographystyle{IEEEtran}
\bibliography{IEEEabrv,bib,science}

\end{document}